\newtheorem{lemma}{Lemma}
\newtheorem{theorem}{Theorem}
\newenvironment{proof}%
{\begin{trivlist}\item[\hspace*{\labelsep}{\it Proof.\/}]}%
{\hfill$\Box$\end{trivlist}}
\newtheorem{theo}{Theorem}
\newtheorem{defi}[theo]{Definition}
\newcommand{\para}{\medskip\noindent}
\newcommand{\head}[1]
 {\markright{\hbox to 0pt{\vtop to 0pt{\hbox{}\vskip 3mm \hrule
 width  \textwidth \vss} \hss}{\sc #1}}}
\begin{document}

\title{\bf A New Upper Bound on 2D Online Bin Packing} 

\author{Xin Han$^1$, Francis Y.L. Chin$^2$, Hing-Fung Ting$^2$, Guochuan Zhang$^3$, Yong Zhang$^2$
\\{\small $^1$ 
Graduate School of Information and Technology,
University of Tokyo}
\\{\small hanxin.mail@gmail.com}
\\ {\small $^2$ Department of Computer Science,
The University of Hong Kong, Hong Kong}
 \\{\small \{ chin, hfting, yzhang\}@cs.hku.hk}
 \\ {\small $^3$ Department of Mathematics, Zhejiang University, China}
 \\ {\small zgc@zju.edu.cn}}
\maketitle

\baselineskip 14pt

\begin{abstract}

The 2D Online Bin Packing is a fundamental problem in Computer Science
and the determination of its asymptotic competitive ratio has
attracted great research attention.   In a long series of papers, the
lower bound of this ratio has been improved from 1.808, 1.856 to 1.907
 and its upper bound reduced from 3.25, 3.0625, 2.8596, 2.7834 to 2.66013. 
 In this paper, we rewrite the upper bound record to 2.5545.
Our idea for the improvement is as follows. 

In SODA 2002 \cite{SS03}, Seiden and van Stee proposed an elegant
algorithm called $H \otimes B$, comprised of the {\em Harmonic algorithm} 
$H$ and the {\em Improved Harmonic algorithm} $B$, 
for the two-dimensional online bin packing problem and proved that 
the algorithm has an asymptotic competitive ratio of at most 2.66013. Since
the best known online algorithm for one-dimensional bin packing is
the {\em Super Harmonic algorithm} \cite{S02}, a natural question to
ask is: could a better upper bound be achieved by using the 
Super Harmonic algorithm  instead of the Improved Harmonic
algorithm?  However, as mentioned in~\cite{SS03}, the previous
analysis framework does not work. In this paper, we give a positive
answer for the above question. A new upper bound of 2.5545 is obtained
for 2-dimensional online bin packing. The main idea is to develop
new weighting functions for the  Super Harmonic algorithm and
propose new techniques to bound the total weight in a rectangular
bin.

 \end{abstract}

 \baselineskip 12.9pt
\section{Introduction}
In two-dimensional bin packing, each item $(w_i,h_i)$ is a rectangle
of width $w_i\le 1$ and height $h_i\le 1$. Given a list of such
rectangular items, one is asked to pack all of them into a minimum
number of square bins of side length one so that their sides are
parallel to the sides of the bin. Rotation is not allowed. The
problem is clearly strongly NP-hard since it is a generalization of
the one-dimensional bin packing problem \cite{CGJ97}. In this paper
we will consider the online version of two-dimensional bin packing,
in which the items are released one by one and we must irrevocably
pack the current item into a bin without any information on the next
items. Before presenting the previous results and our work, we first
review the standard measure for online bin packing algorithms.

\paragraph {\bf Asymptotic competitive ratio}
To evaluate an online algorithm for bin packing problems, we use the
{\em asymptotic competitive ratio} defined as follows. Consider an
online algorithm $A$. For any list $L$ of items, let $A(L)$ be the
cost (number of bins used) incurred by algorithm $A$ and let $OPT(L)$ be
the corresponding optimal value. Then the {\em asymptotic
competitive ratio} for algorithm $A$ is
\[
     R_A^{\infty} =\lim_{k \to \infty} \sup \max_{L}\{ A(L)/OPT(L)| OPT(L) = k\}.
\]

\paragraph{\bf Previous work}
Bin packing has been well-studied. For the
one-dimensional case, Johnson et al.~\cite{JDUGG74} showed that the
First Fit algorithm (FF) has an asymptotic competitive ratio of 1.7.
Yao~\cite{Yao80} improved algorithm FF with a better upper bound of
5/3. Lee and Lee~\cite{LL85} introduced the class of Harmonic
algorithms, for which an asymptotic competitive
ratio of 1.63597 was achieved. Ramanan et al.~\cite{RBLL89} further improved the
upper bound to 1.61217. The best known upper bound so far is from
the Super Harmonic algorithm by Seiden~\cite{S02} whose asymptotic
competitive ratio is at most 1.58889. As for the negative results,
Yao~\cite{Yao80} showed that no online algorithm has asymptotic
competitive ratio less than 1.5. Brown~\cite{B79} and
Liang~\cite{L80} independently provided a better lower bound of
1.53635. The best known lower bound to date is 1.54014~\cite{Vliet92}.

As for two-dimensional online bin packing, a lower bound of 1.6 was
given by Galambos \cite{Gal91}. The result was gradually improved to
1.808 \cite{GvV94}, 1.857 \cite{Vliet95} and 1.907 \cite{BvVW96}.
Coppersmith and Raghan \cite{CR89} gave the first online algorithm
with asymptotic competitive ratio 3.25. Csirik et al. \cite{CFL93}
improved the upper bound to 3.0625. Csirik and van Vliet \cite{CV93}
presented an algorithm for all $d$ dimensions, where in particular for two
dimensions, they obtained a ratio of at most 2.8596.
Based on the techniques on the Improved Harmonic, 
Han et.al \cite{HFG01} improved the upper bound to 2.7834. 
 The best known online
algorithm to date is the one called $A \otimes B$ presented by
Seiden and van Stee \cite{SS03}, where $A$ and $B$ stand for two
one-dimensional online bin packing algorithms. Basically $A$ and $B$
are applied to one dimension of the items with rounding sizes. In
this seminal paper Seiden and van Stee proved that the asymptotic
competitive ratio of $H \otimes B$ is at most 2.66013, where $H$ is
the Harmonic algorithm \cite{LL85} and $B$ is an instance of the
improved Harmonic algorithm. It has been open since then to improve
the upper bound. A natural idea is to use an instance of the Super
Harmonic algorithm \cite{S02} instead of the improved Harmonic algorithm. 
However, as mentioned in paper \cite{SS03}, in that case, the previous analysis
framework  cannot be extended to Super Harmonic.

We also briefly overview the offline results on two-dimensional bin
packing. Chung et al \cite{CGJ82} showed an approximation algorithm
with an asymptotic performance ratio of 2.125. Caprara \cite{Cap02}
improved the upper bound to 1.69103. Very recently Bansal et al.
\cite{BCS06} derived a randomized algorithm with asymptotic
performance ratio of at most 1.525. As for the negative results, Bansal
et al. \cite{BCKS06} showed that the two-dimensional bin packing
problem does not admit an asymptotic polynomial time approximation
scheme.

For the special case where items are squares, there is also a large
number of results \cite{CR89, SS03, MW03, ES04,ES05a,ES05b, HYZ06}.
Especially for bounded space online algorithms, Epstein et al.
\cite{ES05a} gave an optimal online algorithm.

\para{\bf Our contributions}
There are two main contributions in this paper,
 \begin{itemize} 
\item we revisit 1D online bin packing algorithm: Super Harmonic, 
   give new weighting functions for it,  which are much simpler than
   the ones introduced in \cite{S02}, 
   and the new weighting functions have interests in its own. 
\item we generalize the previous analysis framework  for 2D online bin 
    packing algorithms used in \cite{SS03}, 
    and show that the new analysis framework are  very useful 
    in  analyzing 2D or multi-dimensional  online bin packing problems. 
\end{itemize}
By combining the new weighting functions with the new analysis framework,
we design  a new 2D online bin packing algorithm with 
a competitive ratio  2.5545, 
which  improves the previous bound of 2.66013 in SODA 2002 \cite{SS03}.
As mentioned in \cite{SS03},
the old analysis framework does not work well with the old weighting functions
in \cite{S02}, i.e., the old approach does not guarantee an upper bound
better than 2.66013. 
This is testified in the following way:
consider our  algorithm, 
if we  use {\em old} weighting functions with the {\em old}  framework to
analyze it, the competitive ratio is at least 3.04,
and if we  use the {\em old} weighting functions with 
the {\em new} framework, the competitive ratio is at least 2.79.  

\para{\bf Organization of Paper}
Section 2 will review the Harmonic and Super Harmonic algorithms
as preliminaries. Section 3 defines the weighting functions for Super
Harmonic. Section 4 describes and analyzes the two-dimensional online 
bin packing algorithm $H \otimes SH+$. Section 5 concludes.

\section{Preliminaries}
We first review two online algorithms for one-dimensional bin
packing, Harmonic and Super Harmonic, which are employed in
designing online algorithms for two-dimensional bin packing.

\subsection{The Harmonic algorithm}
The Harmonic algorithm is a fundamental bin packing algorithm with a
simple and nice structure, that was introduced by Lee and Lee
\cite{LL85} in 1985. The algorithm works as follows. Given a positive
integer $k$, each item is immediately classified into one of $k$ types 
according to its size upon its arrival.
In particular, if an item has a size in interval $(\frac{1}{i+1}, \frac{1}{i}]$ for
some integer $i$, where $1 \le i <k$, then it is a type-$i$ item; otherwise, it is of type-$k$. 
The type-$i$ item is then packed, using the simple Next Fit (NF) algorithm, into
the open (not fully-packed) bin designated to contain type-$i$ items exclusively;
new bins are opened when necessary. At any time, there is at most one 
open bin for each type and any closed (fully-packed) bin for type-$i$ is packed
exactly with $i$ items of type-$i$ for $1\le i<k$.

For an item of size $x$, we define a weighting function $W_H(x)$ for
the Harmonic algorithm as follows:
  \begin{displaymath}
   W_H(x) = \left\{ \begin{array}{ll}
                     \frac1i, & \textrm{ if $\frac1{i+1} < x \le \frac1i$ with
       $1 \le i < k$,}  \\
                    \frac{k}{k-1}x & \textrm{ if $0 < x \le \frac1k$.}
                    \end{array}
            \right.
  \end{displaymath}
The following lemma is directly from \cite{LL85}.
\begin{lemma}
For any list $L$, we have
\[
 H(L) \le \sum_{p \in L}W_{H}(p) + O(1),
\]
 where $H(L)$ is the number of bins used by the Harmonic algorithm for list $L$.
\end{lemma}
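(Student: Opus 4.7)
The plan is to bound $H(L)$ by accounting for bins separately according to their type, and to show that every closed bin accumulates total weight at least $1$ under $W_H$, while at most one bin per type can remain open at the end of the algorithm's execution.

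First, I would handle the types $i$ with $1 \le i < k$. By the description of Harmonic, every closed bin of type $i$ contains exactly $i$ items (this is consistent with both the lower bound $x > \frac{1}{i+1}$, which forbids a further item, and the upper bound $x \le \frac{1}{i}$, which guarantees $i$ items fit). Each such item carries weight $W_H(x)=\frac{1}{i}$, so the total weight inside a closed type-$i$ bin equals exactly $1$.

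Next I would treat the type-$k$ bins, where items have size $x \le \frac{1}{k}$ and weight $\frac{k}{k-1}x$. Items of type $k$ are packed by Next Fit, so whenever Next Fit closes a bin $B$ and opens the next one, the triggering item fails to fit into $B$. Since that item has size at most $\frac{1}{k}$, the sum of sizes of the items already packed in $B$ must exceed $1 - \frac{1}{k} = \frac{k-1}{k}$. Multiplying by $\frac{k}{k-1}$, the total weight in $B$ exceeds $1$. Therefore every closed type-$k$ bin also has weight greater than $1$.

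Finally, since there is at most one open bin per type at any time, the number of bins still open when $L$ has been fully processed is at most $k$. Adding the contributions, the number of closed bins is at most $\sum_{p \in L} W_H(p)$, and so
\[
H(L) \;\le\; \sum_{p \in L} W_H(p) \;+\; k \;=\; \sum_{p \in L} W_H(p) \;+\; O(1),
\]
as required. The only step that is not completely mechanical is the Next Fit invariant used for the type-$k$ bins, and even that reduces to a one-line observation once one recalls that the rejected item has size at most $\frac{1}{k}$; everything else is a routine verification that the weighting function was designed so that each closed bin absorbs a full unit of weight.
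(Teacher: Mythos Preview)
Your argument is correct and is exactly the standard Lee--Lee proof; the paper itself does not give a proof of this lemma but simply cites \cite{LL85}, so there is nothing to compare against beyond noting that your write-up matches the original.
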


\subsection{The Super Harmonic algorithm}
The Super Harmonic algorithm \cite{S02} is a generalization of the
Improved Harmonic algorithm and the Harmonic algorithm. Super
Harmonic first classifies each item into one of $k+1$ types, where $k$ is a
positive integer, and then assigns to the item a color of either {\em blue}
or {\em red}. It allows items of up to two different types to share the same
bin. In any one bin, 
all items of the same type have same color and items
of different type have different colors. For items of  type-$i$ ($i\le k$), 
the algorithm maintains two
parameters $\beta_i$ and $\gamma_i$ to bound respectively the number of blue
items and the number of red items in a bin. More details are given
below.

\paragraph {\bf Classification into types} Let
$t_1 = 1 > t_2 >...>$ $t_{k} > t_{k+1} = \epsilon > t_{k+2}=0 $ be
real numbers. An interval $I_i$ is defined to be $(t_{i+1}, t_i]$,
for $i= 1,...,k+1$. An item with size $x$ is of type-$i$ if $x \in I_i$.

\paragraph {\bf Coloring red or blue}
Each type-$i$ item is also assigned a color, either red or
blue, for $i \le k$. The algorithm uses two sets of counters,
$e_1,...,e_k$ and $s_1,...,s_k$, all of which are initially zero.
The total number of type-$i$ items is denoted by $s_i$, while the
number of type-$i$ red items is denoted by $e_i$. For $1 \le i
\le k$, during the packing process,  the fraction
of type-$i$ items that are red is maintained, i.e., $e_i=\lfloor \alpha_i
s_i \rfloor$, where $\alpha_1,...,\alpha_k \in$ [0,1] are constants.

\paragraph {\bf Maximal number of blue items}
Let $\beta_i = \lfloor \frac{1}{t_{i}} \rfloor$ for $1 \le i \le k$,
which is the maximal number of blue items of type-$i$ which can be
accepted in a single bin. 

\paragraph {\bf Space left for red items}
Let $\delta_i = 1 -t_i\beta_i$, which is
the lower bound of the space left when a bin consists of $\beta_i$
blue items of type-$i$. If possible, we want to use the space left for
small red items. Note that in the algorithm, in order to simplify the 
analysis, instead of using
$\delta_i$, less space is used, namely $D=\{\Delta_0, \Delta_1,...,\Delta_K\}$,
as the spaces into which red items can be packed, where $0 =
\Delta_0 < \Delta_1 < \cdots < \Delta_{K} < 1/2$ and $K \le k$.
Let $\Delta_{\phi(i)}$ be the space to be used to accommodate
red items in a bin which holds $\beta_i$ blue items of
type-$i$, where function $\phi$ is defined as \{1,...,k\} $\mapsto$
\{0,...,K\} such that $\phi$ satisfies $\Delta_{\phi(i)} \le \delta_{i}$.
If $\phi(i) = 0$ then no red items are accepted. 

For convenient use in our analysis in the next section,
we introduce a function called $\varphi(i)$, which gives the index of the 
smallest space in $D$ into which a red item of type-$i$ can be placed:

\[
  \varphi(i) = \min\{j |  t_i \le \Delta_j, 1 \le j \le K \}.
\]

\paragraph {\bf Maximal number of red items}
Now we define $\gamma_i$. Let $\gamma_i=0$ if $t_i
> \Delta_{K}$; otherwise $\gamma_{i}= max\{1,\lfloor \Delta_1/t_i
\rfloor\}$, i.e., if  $  \Delta_1 <  t_i \le \Delta_K $, we set
$\gamma_i = 1$, otherwise $\gamma_i =\lfloor \Delta_1/t_i \rfloor$.

\paragraph{\bf Naming bins}  It is also convenient to name the bins by
groups:
 \[
    \{(i)|\phi_i = 0, 1 \le i \le k\}
 \]
 \[
   \{(i,?)|\phi_i \ne 0, 1 \le i \le k\}
 \]
 \[
   \{(?,j)|\alpha_j \ne 0, 1 \le j \le k\}
 \]
\[
    \{(i,j)|\phi_i \ne 0, \alpha_j \ne 0, \gamma_j t_j \le \Delta_{\phi(i)},
1 \le i,j \le k\}.
 \]

Group $(i)$ consists of bins that hold only blue items of
type-$i$. Group $(i,j)$ consists of bins that contain blue items of
type-$i$ and red items of type-$j$. Blue group $(i,?)$ and
red group $(?,j)$ are indeterminate bins  {\em
currently} containing only blue items of type-$i$ or red
items of type-$j$ respectively. During packing, red items or
blue items will be packed into indeterminate bins if necessary, i.e., indeterminate
bins will be changed into $(i,j)$.

The Super Harmonic algorithm is outlined below:

{\bf Super Harmonic }
   \begin{enumerate}
   \item  For each item $p$ :  $i \gets$ type of $p$,
        \begin{enumerate}
          \item if $i = k+1$ then use NF algorithm,
           \item else $s_i \gets s_i +1$;
             if $e_i < \lfloor \alpha_i s_i \rfloor$
                  then $e_i \gets e_i +1$; \{ color $p$  red \}
                \begin{enumerate}
                  \item If there is a bin in group $(?,i)$ with fewer
                   than $\gamma_i$ type-i items,
                   then place $p$ in it.  \\
                   Else if, for any $j$, there is a bin in group $(j,i)$
                   with fewer than $\gamma_i$  type-i items then
                   place $p$ in it.
                   \item  Else if there is some bin in group $(j,?)$ such
                         that $\Delta_{\phi(j)} \ge \gamma_i t_i $, then
                          pack $p$ in it and
                        change  the bin into $(j,i)$.
                  \item Otherwise, open a bin $(?,i)$, pack $p$ in it.
                \end{enumerate}
          \item else \{color $p$ blue\}:
             \begin{enumerate}
             \item if $\phi_i =0$ then
                       if there is a bin  in group $i$ with
                            fewer than $\beta_i$ items then pack $p$ in it,
                            else  open a new group $i$ bin,
                           then pack $p$ in it.
              \item Else:
                   \begin{enumerate}
                    \item if, for any $j$, there is a bin in
                         group $(i,j)$  or $(i,?)$
                         with fewer than $\beta_i$ type-i items,
                         then pack $p$ in it.

                    \item  Else if there is a bin in group $(?,j)$ such
                           that $\Delta_{\phi(i)} \ge \gamma_j t_j$ then
                              pack $p$ in it,
                            and change the group of this bin into $(i,j)$.
                    \item Otherwise, open a new bin $(i,?)$ and
                           pack $p$ in it.
                     \end{enumerate}
             \end{enumerate}
        \end{enumerate}
   \end{enumerate}

\section{New Weighting Functions for Super Harmonic}
In this section, we develop new weighting functions for Super
Harmonic that are simpler than the weighting system in \cite{S02}.
The weighting functions will be useful in analyzing the proposed online 
algorithm as we shall see in the next section.

\subsection{Intuitions for defining weights}
Weighting functions are widely used in analyzing online bin packing problems.
Roughly speaking, for an item, the value by one of weight functions
is the fraction of a bin occupied by the item in the online algorithm. 
There is a constraint in defining weights for items for an online algorithm.
Let $K+1$ be the number of weighting functions.
Let $W^{i}(p)$ be the weight of an item $p$, where $1 \le i \le K+1$.
For any input $L$, the constraint is
\begin{equation} \label{eqn:weight-constraint}
A(L)\le \max_{1 \le i \le K+1}\big\{ \sum_{p \in L}W^{i}(p) \big\} + O(1),
\end{equation}
where $A(L)$ is the number of bins used by algorithm $A$.

Consider  Super Harmonic algorithm.
For $1 \le i \le k$, let $l_i$ be the number of type-$i$ pieces.
For $1 \le i, s \le k$, 
let $B_{(i)}, B_{(i,s)}, B_{(i,?)}, B_{(?,i)}$ be the number of bins in 
groups $(i)$, $(i,s)$, $(i,?)$ and $(?,i)$. 
Then we have 
\begin{equation} \label{eqn:blue}
 \sum_{i} \Big\{ B_{(i)} +  \sum_s B_{(i,s)} + B_{(i,?)}\Big\} = \sum_{i} \frac{(1-\alpha_i)l_i}{\beta_i} +O(1)
\end{equation}
and
\begin{equation} \label{eqn:red}
  \sum_{i}\Big\{ B_{(?,i)} + \sum_s B_{(s,i)}\Big\} = \sum_{i} \frac{\alpha_i l_i}{\gamma_i} + O(1).
\end{equation}
So, for each item with size $x \in I_i$, where   $i \le k$,
if we define its weight as below:
\[
  \frac{1-\alpha_i}{\beta_i} + \frac{\alpha_i}{\gamma_i},  
\]
then it is not difficult to see that the constraint (\ref{eqn:weight-constraint}) holds. 
However  the above weighting function is not good enough, i.e., 
it  always leads a competitive ratio at least 1.69103.

The main reason is that for each bin in group $(i,s)$ we account it  twice, 
where $1 \le i, s \le k$.
Next we give some intuitions for improving the above weighting function.

By (\ref{eqn:blue}) and (\ref{eqn:red}), observe that 
\begin{equation} \label{eqn:blue1}
 \sum_{i}  \sum_s B_{(i,s)}  \le \sum_{i} \frac{(1-\alpha_i)l_i}{\beta_i} +O(1)
\end{equation}
and
\begin{equation} \label{eqn:red1}
  \sum_{i} \sum_s B_{(s,i)} \le \sum_{i} \frac{\alpha_i l_i}{\gamma_i} + O(1).
\end{equation}
So, we have 
\begin{eqnarray*}
 \sum_{i} \sum_s B_{(i,s)} = \frac{\sum_{i} \sum_s B_{(i,s)} + \sum_{i} \sum_s B_{(s,i)}}{2} &\le& \sum_{i} \frac{(1-\alpha_i)l_i}{2\beta_i} + \sum_{i} \frac{\alpha_i l_i}{2\gamma_i} +O(1) \\
  &=& l_i \sum_{i} \Big (\frac{1-\alpha_i}{2\beta_i} + \frac{\alpha_i }{2\gamma_i} \Big ) +O(1). 
\end{eqnarray*}
Hence, for an item with size $x \in I_i$,  after packing, 
if there is  a bin in  group $(i,s)$ and also a bin in group $(s,i)$, 
then we can define its weight as below:
\[
  \frac{1-\alpha_i}{2\beta_i} + \frac{\alpha_i}{2\gamma_i}.  
\]
This is the main intuition to lead our weighting functions,
which are given in the next subsection.

\subsection{New weighting functions}
Remember that in Super Harmonic, there is a set $D=\{\Delta_0,
\Delta_1,...,\Delta_K\}$ representing the ``free spaces'' reserved for red
items. 
Recall the two functions $\phi(i)$ and $\varphi(i)$ are related to free spaces
and have the meanings as below:
$\phi(i) =j$ implies that free space $\Delta_j$ is reserved
for red items in a bin consisting of $\beta_i$ blue items of type-$i$,
and $\varphi(i) =j$ indicates that a red item of type-$i$ could be
packed in free space $\Delta_{\ge j}$.

We are now ready to define new weighting functions. Items with size
larger than $\epsilon$ will be first considered. Let $E$ be the number
of indeterminate red group bins $(?, i)$ when the whole packing is
done. 

If $E = 0$, i.e., every red item is placed in a bin with one or more
blue items, then we define the weighting function as:
\begin{equation}
  W^1(x) = \frac{1-\alpha_i}{\beta_i}  \qquad \textrm{ if } x \in I_i.
\end{equation}

Otherwise,  $E >0$ implying that for some $i$, an indeterminate red group bin
$(?, i)$ exists after packing. Let $e$ be the smallest red item in
 indeterminate red group bins. Assume $r$ is the type of item $e$ and $j
= \varphi(r)$. If $2 \le j \le K $ then we define the corresponding
weighting functions as follows:
   \begin{displaymath}
  \begin{array}{ll}
 W^{K+2-j}(x)  =  &   \left\{
                                    \begin{array}{ll}
                 \frac{1-\alpha_i}{\beta_i} + \frac{\alpha_i}{2\gamma_i}     &  \textrm{if } x \in I_i   \textrm{ } \phi(i) < j, \textrm{ and } \varphi(i) < j \\
\vspace{4pt}
            \frac{1-\alpha_i}{\beta_i} + \frac{\alpha_i}{\gamma_i}     &  \textrm{if } x \in I_i   \textrm{ } \phi(i) < j, \textrm{ and } \varphi(i) \ge j \\
\vspace{4pt}
         \frac{1-\alpha_i}{2\beta_i} +    \frac{\alpha_i}{\gamma_i}     &  \textrm{if } x \in I_i   \textrm{ } \phi(i) \ge j, \textrm{ and } \varphi(i) \ge j \\
\vspace{4pt}
             \frac{1-\alpha_i}{2\beta_i} +  \frac{\alpha_i}{2\gamma_i}     &  \textrm{if } x \in I_i   \textrm{ } \phi(i) \ge j, \textrm{ and } \varphi(i) < j
                                    \end{array}
                             \right.
  \end{array}
 \end{displaymath}

If $j =1$, we define
  \begin{displaymath}
  \begin{array}{ll}
  W^{K+1}(x)  =  &   \left\{
                                    \begin{array}{ll}
                 \frac{1-\alpha_i}{\beta_i}      &  \textrm{if } x \in I_i   \textrm{ } \phi(i) =0, \textrm{ and } \varphi(i) =0 \\
\vspace{4pt}
 \frac{1-\alpha_i}{\beta_i} + \frac{\alpha_i}{\gamma_i}     &  \textrm{if } x \in I_i   \textrm{ } \phi(i) =0, \textrm{ and } \varphi(i) >0  \\
\vspace{4pt}
  0     &  \textrm{if } x \in I_i   \textrm{ } \phi(i) > 0, \textrm{ and } \varphi(i) =0  \\
\vspace{4pt}
           \frac{\alpha_i}{\gamma_i}     &  \textrm{if } x \in I_i   \textrm{ } \phi(i) > 0 \textrm{ and } \varphi(i) > 0
                                    \end{array}
                             \right.
  \end{array}
 \end{displaymath}
Note that in the above definitions, if $\gamma_i =0$ then we replace
$\frac{\alpha_i}{\gamma_i}$ with zero. For an item with size $x \in
I_{k+1}$, we always define $W^{j} (x) = \frac{x}{1-\epsilon}$ for
all $j$.

\begin{theorem} \label{th:SHW} For any list $L$, we have
\[
 A(L) \le \max_{1 \le i \le K+1}\big\{ \sum_{p \in L}W^{i}(p) \big\} + O(1),
\]
where $A(L)$ is the number of bins used by Super Harmonic for list
$L$.
\end{theorem}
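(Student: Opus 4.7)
The plan is to use equations~(2) and~(3) together with a case analysis driven by the fate of indeterminate red bins. For items of type at most $k$, write $A'(L) = \sum_i N_i + \sum_{i,s} N_{i,s} + \sum_i N_{i,?} + \sum_s N_{?,s}$, where each $N_{\ast}$ counts bins in the named group. Items of type $k+1$ are packed by Next Fit into at most $\frac{1}{1-\epsilon}\sum x + 1$ bins, which matches $\sum_p W^i(p)$ on these small items for every index~$i$. Since (2) and~(3) remain valid when restricted to any subset of types, these are my two main analytic tools. In the trivial case $E = 0$ every $N_{?,s}$ vanishes, so (3) gives $\sum_{i,s} N_{i,s} = \sum_s \alpha_s l_s / \gamma_s + O(1)$; substituting into $A'(L)$ and applying~(2) leaves $A'(L) = \sum_i (1-\alpha_i) l_i / \beta_i + O(1) = \sum_p W^1(p) + O(1)$.

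For the main case $E > 0$, let $e$ be the smallest red item lying in any indeterminate red bin, let $r$ be its type, and set $j = \varphi(r) \ge 1$. I select the weight function $W^{K+2-j}$ (or $W^{K+1}$ when $j = 1$) and plan to show $A'(L) \le \sum_p W^{K+2-j}(p) + O(1)$. A direct computation partitioning the items by comparing $\phi(i)$ and $\varphi(i)$ to the threshold~$j$, and applying restricted forms of (2) and~(3), expresses the weight sum as $B_1 + \tfrac{1}{2} B_2 + R_2 + \tfrac{1}{2} R_1 + O(1)$, where $B_1, B_2$ split $\sum_i(N_i + \sum_s N_{i,s} + N_{i,?})$ according to whether $\phi(i) < j$ or $\phi(i) \ge j$, and $R_1, R_2$ split $\sum_i(N_{?,i} + \sum_s N_{s,i})$ according to whether $\varphi(i) < j$ or $\varphi(i) \ge j$. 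Expanding $A'(L)$ and matching coefficients, every $N_{i,s}$ term receives total weight in $\{1, \tfrac{3}{2}, 2\}$, so the claim reduces to showing that three potentially underweighted families vanish: $N_i$ with $\phi(i) \ge j$, $N_{i,?}$ with $\phi(i) \ge j$, and $N_{?,s}$ with $\varphi(s) < j$.

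The first is automatic because bins in group $(i)$ exist only when $\phi(i) = 0 < j$. The second uses monotonicity of $\varphi$: $\varphi(s) < \varphi(r) = j$ forces $s > r$, so any type-$s$ item has size at most $t_s \le t_{r+1}$, strictly smaller than $e$, contradicting the minimality of $e$. The main obstacle is the third family. My plan there is a time-ordering argument against Super Harmonic's conversion rules, leveraging the inequality chain $\gamma_r t_r \le \Delta_1 \le \Delta_j \le \Delta_{\phi(i)}$. If $(i, ?)$ existed when $e$ arrived, step~1(b)(i)(B) would have packed $e$ into it and promoted it to $(i, r)$, contradicting the survival of $(i, ?)$. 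If instead $(i, ?)$ was opened later, then the specific indeterminate bin holding $e$ still existed at that moment, so step~1(c)(ii)(B) would have packed the arriving blue item into that $(?, r)$ bin rather than opening $(i, ?)$. Either way the two indeterminates cannot coexist, so $N_{i,?} = 0$ in the relevant range. The case $j = 1$ with $W^{K+1}$ is analogous, except the $\tfrac{1}{2}$ coefficients are replaced by $0$: contributions involving $\varphi(i) = 0$ (equivalently $\gamma_i = 0$) drop out because no red items of such types exist, and the same three vanishing arguments close the proof.
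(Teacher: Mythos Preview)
Your argument is correct and follows the same route as the paper: the decomposition of the weight sum as $B_1 + \tfrac{1}{2}B_2 + R_2 + \tfrac{1}{2}R_1$ is precisely what the paper obtains via its auxiliary quantity $X$ and the averaging bound~(\ref{eqn:averageX}), and your three vanishing claims correspond to the paper's (\ref{eqn:red-zero}), (\ref{eqn:blue-zero}), (\ref{eqn:phi-zero}). Two small slips to fix: you have interchanged the justifications for your ``second'' and ``third'' families (the minimality-of-$e$ argument disposes of $N_{?,s}$ with $\varphi(s)<j$, while the time-ordering argument disposes of $N_{i,?}$ with $\phi(i)\ge j$); and the chain $\gamma_r t_r \le \Delta_1$ is false when $\gamma_r=1$ and $\Delta_1 < t_r \le \Delta_K$ --- in that case use $\gamma_r t_r = t_r \le \Delta_{\varphi(r)} = \Delta_j \le \Delta_{\phi(i)}$ instead. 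Your two-sided time-ordering argument for $N_{i,?}=0$ is in fact more careful than the paper's one-line justification, which tacitly assumes the $(i,?)$ bin already existed when $e$ arrived.
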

\begin{proof}
Fix a list $L$. Let $D$ be the sum of the sizes of the items of 
type-$(k+1)$. By NEXT FIT, we know that the number of bins used for 
type-$(k+1)$ is at most $D/(1-\epsilon) + 1$. 

Again, we use $E$ to denote
the number of indeterminate red group bins when all the packing is
done.
If  $E >0$, let $e $ be the smallest red item in  indeterminate red
group bins. Assume $r$ is the type of item $e$ and $j = \varphi(r)$.
 For $1 \le i \le k$, let $l_i$ be the number of type-$i$ pieces.
Let $B_{(i)}, B_{(i,s)}, B_{(i,?)}, B_{(?,i)}$ be the number of bins in 
groups $(i)$, $(i,s)$, $(i,?)$ and $(?,i)$.

To prove this theorem, we consider three cases.

Case 1:
If $E =0$, i.e., $\sum_i B_{(?,i)} =0$, 
every red item is packed in a bin with one or more blue items.
Therefore we just need to count bins containing blue items:
\begin{eqnarray*}
A(L) &\le& \frac{D}{1-\epsilon} +  \sum_{i} \Big\{ B_{(i)} +  \sum_s B_{(i,s)} + B_{(i,?)}\Big\} +O(1) \\
&\le&  \sum_{x \in I_{k+1}} W^{1}(x) + \sum_{i} \frac{(1-\alpha_i)l_i}{\beta_i} +O(1)  \quad \textrm{by  (\ref{eqn:blue})}\\
&=& \sum_{x \in I_{k+1}, x \in L} W^{1}(x) + \sum_{ x \notin
I_{k+1},x \in L} W^{1}(x)  +O(1).
\end{eqnarray*}

Case 2: $E >0$,  $e $ is  the smallest red item in indeterminate
red group bins and its type is $r$ and $\varphi(r) = j \ge 2$. Since
every red item of type-$i$ is placed in a final group bin $(s,i)$,
where $\varphi(i) < j$, we have
\begin{equation} \label{eqn:red-zero}
 \sum_{\varphi(i) < j} B_{(?,i)} =0.
\end{equation}
On the other hand, we have
\begin{equation} \label{eqn:blue-zero}
  \sum_{\phi(i) \ge j} B_{(i,?)} =0;
\end{equation}
otherwise, $e$ would have been placed into a bin $(i, ?)$,
where $\phi(i) \ge j$.
According to the Super Harmonic algorithm, for any type bin $B_{(i)}$,
we have 
\begin{equation} \label{eqn:phi-zero}
 \phi(i) = 0.
\end{equation}

Define 
\[
 X = \sum_{\substack{ \phi(i) \ge j \\ \varphi(s) < j}} B_{(i,s)},
\]
which is  the total number of  all the bins in groups $(i,s)$ such that
$\phi(i) \ge j$  and $\varphi(s) < j $.
Then we have 
\begin{eqnarray}
  A(L) &\le& \frac{D}{1-\epsilon} +  \sum_i \Big(B_{(i)} + B_{(i,?)} + B_{(?,i)}\Big) + \sum_{i}\sum_s  B_{(i,s)} + O(1) \nonumber \\
&= & \frac{D}{1-\epsilon} +  \sum_i \Big(B_{(i)} + B_{(i,?)} + B_{(?,i)}\Big) + X + \sum_{\phi(i) < j } \sum_s  B_{(i,s)}+ \sum_{\varphi(i) \ge j } \sum_s  B_{(s,i)} + O(1) \nonumber \\
&= & 
 \frac{D}{1-\epsilon} +  \sum_{\phi(i) <j} \Big(B_{(i)} + B_{(i,?)}+ \sum_s  B_{(i,s)}\Big)  + \sum_{\varphi(i) \ge j }\Big(B_{(?,i)} + \sum_s  B_{(s,i)}\Big )+X + O(1). \label{eqn:B-X}
\end{eqnarray}
The last inequality follows directly from (\ref{eqn:red-zero}), (\ref{eqn:blue-zero}) and (\ref{eqn:phi-zero}).

Then by  the  definition of variable $X$, we have
\[
  X \le \sum_{j \le  \phi(i) \le K} \sum_s B_{(i,s)}
  \textrm{ and }
 X \le \sum_{1 \le  \varphi(i) \le j-1} \sum_s B_{(s,i)}.
\]
Therefore, 
\begin{equation} \label{eqn:averageX}
 X \le \Big\{\sum_{j \le  \phi(i) \le K} \sum_s B_{(i,s)} + \sum_{1 \le  \varphi(i) \le j-1} \sum_s B_{(s,i)}\Big\}/2.
\end{equation}
So, by (\ref{eqn:B-X}) and (\ref{eqn:averageX}),
 we have

\begin{eqnarray*}
  A(L) 
 &\le& \frac{D}{1-\epsilon}  +   \sum_{\phi(i) <j} \Big(B_{(i)} + B_{(i,?)}+ \sum_s  B_{(i,s)}\Big)  + \sum_{\varphi(i) \ge j }\Big(B_{(?,i)} + \sum_s  B_{(s,i)}\Big )\\
&& +  \sum_{ \phi(i) \ge j } \sum_s \frac{ B_{(i,s)} }{2}
+  \sum_{  \varphi(i) < j } \sum_s \frac{ B_{(s,i)}}{2}
+ O(1)\\
 &\le & \frac{D}{1-\epsilon}
  + \sum_{\phi(i) < j} \frac{(1-\alpha_i)l_i}{\beta_i}
  + \sum_{\varphi(i) \ge j} \frac{\alpha_i l_i}{\gamma_i}
  +  \sum_{ \phi(i)  \ge  j} \frac{(1-\alpha_i)l_i}{2\beta_i}
  +  \sum_{\varphi(i) < j} \frac{\alpha_i l_i}{2\gamma_i} + O(1)\\
&\le&  \frac{D}{1-\epsilon}
  + \sum_{\substack{ \phi(i) < j \\ \varphi(i) < j}} \Big(\frac{(1-\alpha_i)l_i}{\beta_i} + \frac{\alpha_i l_i }{2\gamma_i} \Big)
 + \sum_{\substack{ \phi(i) < j \\ \varphi(i) \ge j}}\Big( \frac{(1-\alpha_i)l_i}{\beta_i} + \frac{\alpha_i l_i }{\gamma_i} \Big) \\
&& + \sum_{\substack{ \phi(i) \ge j \\ \varphi(i) \ge j}} \Big(\frac{(1-\alpha_i)l_i}{2\beta_i} + \frac{\alpha_i l_i }{\gamma_i} \Big)
+ \sum_{\substack{ \phi(i) \ge j \\ \varphi(i) < j}} \Big(\frac{(1-\alpha_i)l_i}{2\beta_i} + \frac{\alpha_i l_i }{2\gamma_i} \Big) + O(1) \\
    &= & \sum_{x \in I_{k+1}, x \in L} W^{K+2-j}(x) + \sum_{ x \notin I_{k+1}, x \in L} W^{K+2-j}(x)  +O(1)
\end{eqnarray*}
The second inequality follows directly 
from (\ref{eqn:blue}) and (\ref{eqn:red}).

Case 3. $E >0$ and $j =1$. The arguments are analogous with Case 2.
According to the Super Harmonic algorithm, 
for  any type of bin $(i,s)$, we have $\varphi(s) \ge 1$, 
where $1 \le i, s \le k$ and $k$ is a parameter defined in Super Harmonic. 
So, there is no such bin $(i,s)$ with $\varphi(s) < 1$.
Then we have 
\begin{eqnarray*}
A(L) &\le& \frac{D}{1-\epsilon} +  \sum_{\phi(i) < 1} \Big(B_{(i)}+
B_{(i,?)} + \sum_s B_{(i,s)}\Big)  +\sum_{\varphi(i) \ge 1} \Big(B_{(?,i)} +\sum_s B_{(s,i)}\Big)+O(1)\\
 &\le & \frac{D}{1-\epsilon}
  + \sum_{\phi(i) =0} \frac{(1-\alpha_i)l_i}{\beta_i}
  + \sum_{\varphi(i) \ge 1} \frac{\alpha_i l_i}{\gamma_i}
  + O(1)\\
&\le&  \frac{D}{1-\epsilon}
 + \sum_{\substack{ \phi(i) =0 \\ \varphi(i)  =0}} \frac{(1-\alpha_i)l_i}{\beta_i}
  + \sum_{\substack{ \phi(i) =0 \\ \varphi(i) > 0}}  \Big(\frac{(1-\alpha_i)l_i}{\beta_i} + \frac{\alpha_i l_i }{\gamma_i} \Big)
 + \sum_{\substack{ \phi(i) > 0 \\ \varphi(i) > 0}} \frac{\alpha_i l_i }{\gamma_i} +O(1)\\
    &= & \sum_{x \in I_{k+1}, x \in L} W^{K+1}(x) + \sum_{ x \notin I_{k+1}, x \in L} W^{K+1}(x)  +O(1)
\end{eqnarray*}

Therefore, we have $A(L) \le \max_{1 \le i \le K+1}\big\{ \sum_{p
\in L}W^{i}(p) \big\} + O(1)$.
\end{proof}

\section{Algorithm $H \otimes SH+$ and Its Analysis}
In the section, we first review a class of online algorithms for
two dimensional online bin packing, called $H \otimes B$
\cite{SS03}. Next we introduce a new instance of algorithm $H \otimes
SH+$, where $H$ is Harmonic and $SH+$ (Strange Harmonic+) is an
instance of Super Harmonic. Then we propose some new techniques  on
how to bound the total weight in a single bin, which is crucial to
obtaining a better asymptotic competitive ratio for the $H \otimes B$
algorithm. Finally, we apply new weighting functions for $SH+$ to
analyze the two-dimensional online bin packing algorithm $H \otimes
SH+$ and show its competitive ratio  at most 2.5545, which
implies that the new weighting functions work very well with the
generalized approach of bounding the total weight in a single bin.
Note that as mentioned in \cite{SS03} if we apply the weighting
functions of $SH+$ derived from \cite{S02} directly to analyze
algorithm $H \otimes SH+$ then the upper bound cannot be improved.

\subsection{Algorithms $H \times B$ and $H \otimes B$}
Now we review two-dimensional online bin packing algorithms $H
\times B$ and $H \otimes B$ \cite{SS03}, where $H$ is Harmonic and
$B$ is Super Harmonic.

Given an item $p =(w,h)$, $H \times B$ operates as follows:
\begin{enumerate}
\item {\bf Packing items into slices}:
       If $w \ge \epsilon$ then pack $p$ into a slice of height 1 and width $t_i$
       by $H$ (Harmonic algorithm), where $t_{i+1} < w \le t_i$;
       else pack it into a slice of height 1 and width $ \epsilon(1-\delta)^i$
       by $H$ (Harmonic algorithm), where $\epsilon (1-\delta)^{i+1} < w \le \epsilon (1-\delta)^i$ and $\delta >0$ is  arbitrarily small.
\item {\bf Packing slices into bins}:
      When a new slice is required in the above step,
      we allocate it from a  bin using algorithm $B$.
\end{enumerate}

$H \otimes B$ is a randomized algorithm, which operates as follows:
before processing begins, we flip a fair coin. If the result is
heads, then we run $ H \times B$; otherwise we run $B \times H$, i.e.,
the roles of height and width are interchanged. Note that it is
possible to de-randomize $H \otimes B$ without increasing its
performance ratio. For details, we refer to \cite{SS03}.

\begin{theorem}\label{th:ratio}
If an online 1D bin packing algorithm $B$ has weighting functions
$W_B^{i}(x)$ such that $B(L) \le \max_i\{\sum_{x \in L}
W_B^{i}(x)\} + O(1).$ Then  the cost by algorithm $H\otimes B$ for input $L$
is at most
\[
 \frac{1}{2(1-\delta)} \Big( \max_i \big\{ \sum_{p \in L} W_{H \times B}^i(p)\big\}+
\max_i \big\{\sum_{p \in L} W_{B \times H}^i(p)\big\}
\Big) +O(1),
\]
and the asymptotic competitive ratio of
algorithm $H\otimes B$ is at most
\[
\frac{1}{2(1-\delta)}\max_{\forall X}\Bigg(\max_{ i}\bigg\{  \sum_{(x,y) \in X} W_H(x) W_B^{i}(y) \bigg\} +
   \max_{i}\bigg\{ \sum_{(x,y) \in X} W_H(y) W_B^{i}(x),  \bigg\}\Bigg)
\]
where $\delta$ is a parameter defined in $H \otimes B$ algorithm
and $X$ is a set of items which fit in a single bin.
\end{theorem}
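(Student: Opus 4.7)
The plan is to first prove the additive (cost) bound by analyzing $H \times B$ and $B \times H$ via weighting functions, then average for $H \otimes B$, and finally derive the asymptotic competitive ratio through an $\opt$-decomposition combined with a max-swapping inequality.

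For $H \times B$, I would define $W_{H \times B}^i((w, h)) := W_H(h) \cdot W_B^i(w)$. The algorithm proceeds in two nested layers: Harmonic classifies item widths (large widths by $t_i = 1/i$, small widths by $\epsilon(1-\delta)^i$) to determine slice widths, and within each width class Harmonic packs items by height into those slices; then $B$ packs the slices into bins as 1D items of size equal to the slice widths. By the hypothesis on $B$, the bin count is at most $\max_i \sum_s W_B^i(w_s) + O(1)$, where $s$ ranges over slices. By Lemma 1 applied within each width class $j$, the number of slices of width $w_j$ is at most $\sum_{p \in L_j} W_H(h(p)) + O(1)$. Substituting and re-indexing over items gives an upper bound $\max_i \sum_{p \in L} W_B^i(w_s(p)) W_H(h(p)) + O(1)$, where $w_s(p)$ is the slice width assigned to $p$. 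For items with $w(p) \ge \epsilon$, the slice width is the Harmonic boundary $t_{wc(p)}$; assuming $B$'s type boundaries are aligned with Harmonic's in this range (a design constraint), $W_B^i(w_s(p)) = W_B^i(w(p))$. For items with $w(p) < \epsilon$, both $w_s(p)$ and $w(p)$ are type-$(k+1)$ items for $B$ with weight $x/(1-\epsilon)$, and $w_s(p)/w(p) \le 1/(1-\delta)$, giving $W_B^i(w_s(p)) \le W_B^i(w(p))/(1-\delta)$. Thus $H \times B(L) \le \frac{1}{1-\delta} \max_i \sum_{p \in L} W_{H \times B}^i(p) + O(1)$, and a symmetric argument yields $B \times H(L) \le \frac{1}{1-\delta} \max_i \sum_{p \in L} W_{B \times H}^i(p) + O(1)$ with $W_{B \times H}^i((w,h)) := W_H(w) \cdot W_B^i(h)$.

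Since $H \otimes B$ picks between $H \times B$ and $B \times H$ with equal probability (and can be derandomized by running both and taking the better packing, as sketched in \cite{SS03}), averaging the two bounds yields the first claim. For the competitive ratio, fix $L$ with $k = \opt(L)$ and let $X_1, \ldots, X_k$ be the optimal partition, each $X_t$ fitting in a single bin. Since $\sum_{p \in L} W(p) = \sum_t \sum_{p \in X_t} W(p)$ and $\max_i \sum_t a_i(t) \le \sum_t \max_i a_i(t)$, I can bound
\[
\max_i \sum_{p \in L} W_{H \times B}^i(p) + \max_i \sum_{p \in L} W_{B \times H}^i(p) \le \sum_t \Big[\max_i \sum_{p \in X_t} W_{H \times B}^i(p) + \max_i \sum_{p \in X_t} W_{B \times H}^i(p)\Big],
\]
which is at most $\opt(L) \cdot \max_X \big[\max_i \sum_{(x,y) \in X} W_H(y) W_B^i(x) + \max_i \sum_{(x,y) \in X} W_H(x) W_B^i(y)\big]$ over item-sets $X$ fitting in a single bin. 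Dividing by $\opt(L)$ and taking the asymptotic limit yields the stated ratio.

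The main obstacle is the careful treatment of small items, the source of the $(1-\delta)$ factor. The geometric intervals $(\epsilon(1-\delta)^{i+1}, \epsilon(1-\delta)^i]$ ensure each small item's slice width exceeds its actual width by at most a factor of $1/(1-\delta)$, and because $B$'s type-$(k+1)$ weight $x/(1-\epsilon)$ is linear in size, this factor propagates cleanly into the $W_B^i$ bound. A secondary subtlety is the compatibility of $B$'s type boundaries with Harmonic's $1/i$ boundaries so that $W_B^i(w_s) = W_B^i(w)$ for large widths; this is a design constraint the $SH+$ instance is set up to satisfy. Once the weighting functions are defined and these two alignments are set, the two applications of Lemma 1 (one per orientation) and the final decomposition are routine.
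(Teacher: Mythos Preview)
The paper does not actually prove Theorem~\ref{th:ratio}; it is stated without proof and is essentially the framework result imported from Seiden and van Stee~\cite{SS03}. Your sketch reconstructs that argument correctly: bound the number of bins via $B$'s weighting on slice widths, bound the number of slices per width class via Lemma~1 (Harmonic on heights), compose, handle the small-width blowup by the $(1-\delta)^{-1}$ factor, average over the two orientations, and finish with the $\opt$-decomposition plus $\max_i \sum_t \le \sum_t \max_i$.

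One clarification worth making: you write that ``Harmonic classifies item widths (large widths by $t_i = 1/i$)'' and then need an alignment assumption so that $W_B^i(w_s(p)) = W_B^i(w(p))$. In the algorithm as described here, the slice widths for large items are taken to be $B$'s own interval endpoints $t_i$ (the Super Harmonic boundaries), not Harmonic's $1/i$; Harmonic only acts on the \emph{height} coordinate inside a slice. Hence $w_s(p) = t_i$ and $w(p) \in (t_{i+1}, t_i]$ lie in the same $B$-type by construction, and the equality $W_B^i(w_s(p)) = W_B^i(w(p))$ holds automatically---no extra design constraint is needed. A second minor point: when you substitute the per-class slice count into $\sum_s W_B^i(w_s)$, the additive $O(1)$ from Lemma~1 is multiplied by $\sum_c W_B^i(w_c)$; this stays $O(1)$ because there are only $k$ large classes and the small classes contribute a convergent geometric series $\sum_c \epsilon(1-\delta)^c/(1-\epsilon)$. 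With these two points tightened, your argument is complete and coincides with the intended one.
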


\subsection{ An instance of Super Harmonic $SH+$} \label{subsec:SH+}
 As mentioned in \cite{S02}, 
 it is a hard problem to find appropriate  parameters
 in designing an instance of Super Harmonic, especially setting $t_i$.
 The parameters in $SH+$ are found through a trial-and-error way
 and are defined as follows:
\begin{displaymath}
 \begin{array}{|c|c|c|c|c|c|c|c|}
  \hline
    i & t_i & \alpha_i & \beta_i &  \delta_i & \phi(i) & \varphi(i) & \gamma_i \\
  \hline
   1  &  1   & 0 & 1  &  0      &   0  & 0 &0     \\
   2  &  0.706 & 0  & 1 & 0.294   &  1   & 0  &0  \\
   3  &  0.657 & 0 & 1  &  0.343    &   2  &   0 &0 \\
   4  &  0.647 & 0 & 1  &  0.353    &   3 &   0  &0  \\
   5  &  0.625 & 0& 1  &  0.375  &  4   &  0 &0 \\
   6  & 0.6   & 0 & 1   &  0.4   &  5    &  0 & 0 \\
  7   &   0.58  & 0 & 1  &  0.42   &  6   &  0&  0  \\
  8   &  0.5  & 0 & 2   &  0     & 0    &  0 &  0 \\
 9    & 0.42  & 0.162 & 2   &  0.16   & 0    &  6 &  1   \\
 10   & 0.4   & 0.192 & 2  & 0.2   &   0   &  5 &    1  \\
 11    & 0.375 & 0.2346 & 2 & 0.25 &   0  &  4&    1  \\
 12    &   0.353 & 0.3004 & 2  &  0.294   & 1   &  3 &  1   \\
 13    &   0.343 & 0.3077 & 2  &  0.314   & 1    &  2 &  1   \\
 14    & 1/3  & 0 & 3  &  0     & 0    &  0   &  0 \\
 15    &  0.294 & 0.0816 & 3  &  0.118 & 0   &  1 &  1 \\
 16   &  1/4   & 0.186 & 4  &  0     & 0   &  1 &  1 \\
 17   &  1/5   & 0.092 & 5  &  0     & 0   &  1 &  1  \\
 18   &   1/6  &0.1456 & 6  &  0     & 0    &  1 &  1  \\
 19   &  0.147 & 0.2162 & 6   & 0.118       &0    &  1  & 2 \\
 20   &  1/7   & 0.1525 & 7   & 0      &0    &  1  & 2 \\
 21-49 &  1/(i-13) & ff(i)  & i-13  &  0     & 0   &  1 &  \lfloor\Delta_1 /t_i \rfloor  \\
50   & 1/37   & 0   & 37  &  0     & 0   &  0 &  0  \\
 51   &    1/38 & 0 & * &  *     & *     &  * &  *  \\
\hline
 \end{array}
\hspace{3mm}
\begin{array}{|c|c|c|}
 \hline
  j = \phi(i) & \Delta_j  &  \textrm{ Red accepted} \\
 \hline
  1 &  0.294    &  15..50   \\
  2 &  0.343    &  13, 15..50  \\
  3 &  0.353    &  12,13, 15..50   \\
  4 &  0.375    &  11..13, 15..50  \\
  5 &  0.4      &  10..13, 15..50   \\
  6 &  0.42     &   9..13,  15..50    \\
 \hline
\end{array}
 \end{displaymath}
 where $ff(i) = 1.35(50-i)/37(i-12)$.

Then  we have seven  weighting functions for $SH+$, i.e.,
$W_{B}^{i}$  as defined in the last section, where  $1 \le i
\le 7$.

\subsection{Previous framework for calculating upper bounds}
In this subsection, we first introduce the previous framework
for computing the upper bound of the competitive ratio of 
 $H \otimes SH+$, then mention that the previous framework does not 
work well with the instance in the last subsection, i.e., 
the previous framework does not  lead a better upper bound. 

Let $p=(x,y)$ be an item.
We define the following functions.
\[
  W_{H\times B}^{i} (p)= W_H(x)  W_{B}^{i}(y), \textrm{   }
  W_{B \times H}^{i} (p)= W_H(y) W_{B}^{i}(x),
\]
and
\[
W^{i,j} (x,y)= \frac{W_H(x)W_{B}^{i}(y) +  W_{B}^{j}(x) W_H(y)}{2}.
\]

Then we  can   obtain an upper bound on the competitive ratio  $R$ of algorithm $H \otimes SH+$ as follows by Theorems \ref{th:SHW} and  \ref{th:ratio},
 where $X$ is a set of  items which fit in a single bin.

\begin{eqnarray}
  R &\le& \frac{1}{2(1-\delta)}\max_{\forall X}\Bigg(\max_{1 \le i \le 7}\bigg\{  \sum_{p \in X} W_{H\times B}^{i}(p)\bigg\} +
   \max_{1 \le i \le 7}\bigg\{ \sum_{p \in X} W_{B\times H}^{i}(p) \bigg\}\Bigg) \nonumber \\
 &\le&  \frac{1}{(1-\delta)}\max_{1 \le i, j \le 7,\forall X }\bigg\{ \sum_{p \in X}(W_{H \times B}^{i}(p) + W_{B \times H}^{j}(p))/2\bigg\} \nonumber \\
 &=&  \frac{1}{(1-\delta)}\max_{1 \le i, j \le 7,\forall X }\bigg\{  \sum_{p \in X}W^{i,j} (x,y)\bigg\} \label{eqn:final-ratio}
\end{eqnarray}

The value of $R$ can be estimated by the following approach.
\begin{defi}
Let $f$ be a function mapping from $(0,1]$ to $\mathbb{R}^+$.
$\mathcal{P}(f)$ is the mathematical program: maximize $\sum_{x \in
X}f(x)$ subject to $\sum_{x \in X} \le 1$, over all finite sets of
real numbers $X$. We also use  $\mathcal{P}(f)$ to denote the value
of this mathematical program.
\end{defi}

\begin{lemma}\label{lemma:SS03} \cite{SS03}
Let $f$ and $g$ be functions mapping from $(0,1]$ to $\mathbb{R}^+$.
Let $F=\mathcal{P}(f)$ and $G =\mathcal{P}(g)$. Then the maximum of
$\sum_{p \in X} f(h(p))g(w(p))$ over all finite  multisets of items
$X$ which fit in a single bin is at most $FG$, where $p$ is a
rectangle and $h(p)$ and $w(p)$ are its height and width,
respectively.
\end{lemma}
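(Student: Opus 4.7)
The plan is to bound $\sum_{p \in X} f(h(p))\,g(w(p))$ by two orthogonal 1D slicings of the packing. Fix any axis-aligned packing of $X$ in $[0,1]^2$. For each vertical column $x \in [0,1]$, let $B(x)$ denote the items whose horizontal ranges contain $x$; they stack disjointly in a unit-height column, so $\sum_{p \in B(x)} h(p) \le 1$, and by the definition of $F = \mathcal{P}(f)$,
\[
\sum_{p \in B(x)} f(h(p)) \le F.
\]
Symmetrically, for every horizontal row $y$, the items crossing $y$ have widths summing to at most $1$, so $\sum_{p \in A(y)} g(w(p)) \le G$ by $\mathcal{P}(g)$.

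The key step is to combine these two 1D bounds into the product bound $FG$. I would do this by producing a non-negative measure $\mu$ on $[0,1]$ that satisfies $\mu([x(p),\, x(p) + w(p)]) \ge g(w(p))$ for every $p \in X$ and has total mass $\mu([0,1]) \le G$. Granted such a $\mu$, the lemma follows from a single integration of the column inequality against $d\mu$:
\[
\sum_p f(h(p))\,g(w(p)) \le \int_0^1 \Bigl(\sum_{p \in B(x)} f(h(p))\Bigr)\, d\mu(x) \le F \cdot \mu([0,1]) \le FG.
\]

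The main obstacle is producing the measure $\mu$ with total mass at most $G$. The covering inequalities $\mu([x(p), x(p) + w(p)]) \ge g(w(p))$ form a linear system that must be satisfied globally while respecting the mass budget, and this feasibility is where the horizontal-slice bound enters: viewed on the $y$-axis, each item $p$ is an interval of length $h(p)$ carrying weight $g(w(p))$ with cumulative load at every $y$ bounded by $G$, and a suitable LP-duality/interval-scheduling argument promotes this load bound into the existence of a non-negative measure on $[0,1]$ meeting the covering constraints with total mass at most $G$. Once this construction is in place, the lemma follows immediately by the one-line integration above.
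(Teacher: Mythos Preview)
The paper does not prove this lemma; it is quoted from \cite{SS03} without proof, so there is no in-paper argument to compare against. I will therefore assess your proposal on its own.

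Your integration framework is sound: if a nonnegative measure $\mu$ on the $x$-axis satisfies $\mu([x(p),x(p)+w(p)])\ge g(w(p))$ for every $p$ and $\mu([0,1])\le G$, then integrating the column bound $\sum_{p\in B(x)} f(h(p))\le F$ against $d\mu$ yields the lemma in one line, exactly as you wrote.

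The gap is in your justification for the existence of $\mu$. You set up $\mu$ on the $x$-axis (the covering constraints involve $x(p)$ and $w(p)$), but then argue for feasibility using the \emph{$y$-axis} load bound $\sum_{p\in A(y)} g(w(p))\le G$. These live on different axes and are not connected: the row bound says nothing about how much $\mu$-mass the $x$-interval of any particular item must receive. As written, the ``LP-duality/interval-scheduling'' step is asserted across the wrong coordinate.

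The measure you want does exist, but the reason is purely an $x$-axis fact and does not use the row bound at all. By LP duality (after discretizing at the $x$-edges), feasibility of $\mu$ with total mass $\le G$ is equivalent to: for every $\lambda\ge 0$ with $\sum_{p\in B(x)}\lambda_p\le 1$ for all $x$, one has $\sum_p \lambda_p\, g(w(p))\le G$. The constraints say $\lambda$ lies in the stable-set polytope of the interval graph on the $x$-intervals; since interval graphs are perfect, $\lambda$ is a convex combination of indicators of sets $S$ of pairwise $x$-disjoint items. For each such $S$, $\sum_{p\in S} w(p)\le 1$, hence $\sum_{p\in S} g(w(p))\le G$ by the definition of $G=\mathcal{P}(g)$, and the dual bound follows. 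With this correction your proof goes through, and in fact the horizontal-slice inequality you stated is never used.
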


In \cite{SS03}, $f$ and $g$ are defined as below:

\[
  f^{i,j} (y)=   \frac{W_H(y) +  W_{B}^{i}(y)}{2},
\]
 and
\[
  g^{i,j}(x) = \sup_{0 < y \le 1} \frac{ W^{i,j}(x,y)}{f^{i,j}(y)}.
\]
By the above definitions, we have 
\[
 W^{i,j}(x,y) \le f^{i,j} (y)g^{i,j}(x),
\]
for all $0 \le x \le 1$ and $0 \le y \le 1$.

{\bf Remarks}: As mentioned in \cite{SS03},
the old weighting functions \cite{S02} do not work well with
the calculating framework used in \cite{S02, SS03},
i.e., the previous framework with the old weighting function
does not guarantee  an upper bound better than 2.66013.

\subsection{A new framework for calculating upper bound}
In this subsection, we first generalize the previous analysis
framework by introducing a new lemma and developing new functions for $f$ and $g$ 
in order to bound the total weight in a single bin. 
Then we apply our new weighting functions for   Super Harmonic to
algorithm $H \otimes SH+$ and obtain a new upper bound for two-dimensional online
bin packing.

\begin{lemma}\label{lemma:transpose}
$\max_{\forall X }\bigg\{  \sum_{p \in X}W^{i,j} (x,y)\bigg\}
 = \max_{\forall X }\bigg\{  \sum_{p \in X}W^{j,i} (x,y)\bigg\}
$, where $1 \le i, j \le 7$
\end{lemma}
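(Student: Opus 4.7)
The plan is to exploit the symmetry between $(i,j) \leftrightarrow (j,i)$ under transposition of items, together with the fact that the ``fits in a single bin'' property is invariant under swapping widths and heights (because the bin is a unit square). First, I would verify the pointwise identity
\[
W^{i,j}(x,y) = W^{j,i}(y,x)
\]
for every $(x,y)\in (0,1]^2$. This is immediate from the definition $W^{i,j}(x,y) = \tfrac{1}{2}\bigl(W_H(x)W_B^{i}(y) + W_B^{j}(x)W_H(y)\bigr)$: simultaneously swapping the superscripts $i,j$ and the arguments $x,y$ merely exchanges the two terms inside the parentheses.

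Next, for any finite multiset $X$ of items fitting in a single unit bin, I define its transpose $X^{T} = \{(y,x) : (x,y)\in X\}$. Since the bin is a unit square, any valid packing of $X$ can be rotated as a whole (bin and items together) to yield a valid packing of $X^{T}$; note that this global rotation is distinct from rotating individual items inside a fixed bin, so the no-rotation constraint of the original problem is not violated. Consequently $X \mapsto X^{T}$ is an involution, hence a bijection, on the family of multisets that fit in a single bin.

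Combining the two observations,
\[
\sum_{(x,y)\in X} W^{i,j}(x,y) \;=\; \sum_{(x,y)\in X} W^{j,i}(y,x) \;=\; \sum_{(u,v)\in X^{T}} W^{j,i}(u,v).
\]
As $X$ ranges over all packable multisets, so does $X^{T}$, so the two sums realize exactly the same set of real values; taking suprema on both sides then yields the claimed equality. There is no serious obstacle here: the lemma is essentially a relabeling/symmetry statement, and the only thing that needs genuine verification is the elementary identity $W^{i,j}(x,y) = W^{j,i}(y,x)$, which drops out of the definitions in one line.
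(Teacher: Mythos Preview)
Your proposal is correct and follows essentially the same approach as the paper: establish the pointwise identity $W^{i,j}(x,y)=W^{j,i}(y,x)$ from the definition, transpose each feasible set $X$ to obtain another feasible set, and use the resulting bijection on feasible patterns to conclude that the two suprema coincide. Your remark that the global transposition of bin-plus-contents does not conflict with the per-item no-rotation rule is a helpful clarification, but the underlying argument is identical to the paper's.
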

\begin{proof}
 By definition, observe that for any  $1 \le i, j \le 7$,
\begin{equation} \label{eqn:xyyx}
  W^{i,j}(x,y) = W^{j,i} (y,x).
\end{equation}
Let $X =\{p_1, p_2, \dots, p_m\}$ be a set of rectangles which fit
into  a single bin, where $p_i = (x_i,y_i)$ is the $i$-th rectangle in
$X$. If we exchange roles of $x$ and $y$ of $p_i$ to get new
rectangles $p_i' =(y_i,x_i)$ for all $i$, then it is not difficult
to see that the new set $ X' =\{p_1', p_2', \dots, p_m'\}$ is also a
feasible pattern, i.e., all items can fit in a single bin. On the
other hand, by equation (\ref{eqn:xyyx}), we have
\[
  \sum_{p \in X}W^{i,j} (p)
 =  \sum_{p' \in X'}W^{j,i} (p'),
\]
 where $1 \le i, j \le 7$.
There is a one-to-one mapping between $X$ and $X'$ in all the feasible patterns.
Therefore, we have this lemma.
\end{proof}

\noindent{\bf New functions $f$ and $g$}:
We define new functions $f$ and $g$ such that 
(i) Lemma  \ref{lemma:SS03} can be applied to bound the weight in a single bin,
and (ii) the resultant bound is not too loose.
The new functions $f$ and $g$ are defined as follows:
\[
  f^{i,j} (y)=  \lambda_{i,j} W_H(y) +  (1-\lambda_{i,j})W_{B}^{i}(y),
\]
where $0 \le \lambda_{i,j} \le 1$ and
\[
  g^{i,j}(x) = \sup_{0 < y \le 1} \frac{ W^{i,j}(x,y)}{f^{i,j}(y)}.
\]
Note that in \cite{SS03}, $\lambda_{i,j}$ are 1/2 for all $i,j$.
It is not difficult to see that the following inequality still holds
although we have generalized the definition of the $f$ function,
\[
 W^{i,j}(x,y) \le f^{i,j} (y)g^{i,j}(x)
\]
for all $0 \le x \le 1$ and $0 \le y \le 1$.

\smallskip

\noindent{\bf  New approach of calculating $\mathcal{P}(f)$}:
In order to  use   Lemma \ref{lemma:SS03}  to obtain the upper bound on
the competitive ratio $R$ of algorithm $H \otimes B$,
we need to calculate $\mathcal{P}(f^{i,j})$ and
$\mathcal{P}(g^{i,j})$.
 Let $f$ be one of   $f^{i,j}$ or  $g^{i,j}$
for $1 \le i, j \le 7$.
In \cite{S02}, Seiden wrote a programming to  enumerate all the 
feasible patterns to get the bounds for  $\mathcal{P}(f)$. 
Here, we give a simple approach by calling LP solver directly 
to estimate  $\mathcal{P}(f)$, which can be modeled  as the
following mixed integer program (MIP):
     \begin{eqnarray*}
 \textrm{max. \qquad } f &=& \sum_{i=1}^{50}x_i w_i + (1- \sum_{i=1}^{50} x_i (t_{i+1} + \epsilon) ) \times \frac{1}{1-t_{51}} \qquad (1)\\
       \textrm{s.t. \qquad }
                      && \sum_{i=1}^{50} x_i (t_{i+1} + \epsilon ) \le 1, \\
                      && x_i \ge 0, \textrm{ integer}.
  \end{eqnarray*}
where $x_i$ is the number of type-$i$ items in a feasible pattern, $w_i$ is
the weight for an  item of type-$i$, which is decided by function $f$,
i.e., $w_i =f(p)$ if $p \in (t_{i+1},t_i]$. Since $\epsilon >0$ can
be arbitrarily small, we cannot  find an exact value for $\epsilon$.
Therefore, we set $\epsilon =0$ and re-model the above MIP as
follows.
\begin{displaymath}
\begin{array}{ccc}
 \textrm{max. \qquad } f &=& \sum_{i=1}^{50}x_i w_i + (1- \sum_{i=1}^{50} x_i t_{i+1} ) \times \frac{1}{1-t_{51}} \qquad (2) \\
\end{array}
\end{displaymath}
\begin{displaymath}
 \begin{array}{lll}
       \textrm{s.t. \qquad }
                       \sum_{i=1}^{50} x_i t_{i+1}  \le 1, &&    \\
                       x_i \le 1, \textrm{ for }1 \le i \le 7, && 
                       x_i \le 2, \textrm{ for } 8 \le i \le 13, \\
                       x_i \le 3, \textrm{ for } 14 \le i \le 15,  &&
                        x_i \le i-12, \textrm{ for } 16 \le i \le 17, \\
                        x_{18} + x_{19} \le 6,  &&
                        x_i \le i-13, \textrm{ for } 20 \le i \le 50, \\
                       2x_7 + x_{15} \le 3.9,   &&
                        3x_7 + 2 x_{13} + x_{17} \le 5.9, \\
                      4x_{13} + 3 x_{15} + x_{24} \le 11.9,  &&  
                        5x_7 + 3.53 x_{11} + 1.47x_{18} \le 9, \\
                       12x_7 + 8 x_{13} + 3x_{20} + x_{36} \le 23, && 
                        9x_7 + 6 x_{13} + 2x_{21} + x_{30} \le 17, \\
                      x_i \ge 0, \textrm{ integer}. &&
   \end{array}
\end{displaymath}
Note that the new constraints do not eliminate any feasible solutions
of MIP (1). For example, consider the constraint $5x_7 + 3.53
x_{11} + 1.47x_{18} \le 9$. Since an item of type-7 has size larger
than $0.5$, an item of type-11 has size larger than $0.353$ and an
item of type-18 has size larger than $0.147$, we have $0.5x_7+ 0.353
x_{11} + 0.147x_{18} < 1$. So, we have  $5x_7+ 3.53 x_{11} +
1.47x_{18} < 10$. It is not difficult to see that the following
inequality $ 5x_7 + 3.53 x_{11} + 1.47x_{18} \le 9$ is equivalent
to $5x_7+ 3.53 x_{11} + 1.47x_{18} < 10$ when $x_7$, $x_{11}$ and
$x_{18}$ are non-negative integers. For other constraints in MIP
(2), the arguments are analogous.

To solve MIP (2), we use a tool for solving linear and integer
programs called {\em GLPK} \cite{GLPK}. We write a program to
calculate $W^{i,j}(x,y)$, $g^{i,j}(x)$  and $f^{i,j}(y)$ for each ($i,j$),
and then call API of GLPK  to calculate
$\mathcal{P}(f^{i,j})$ and  $\mathcal{P}(g^{i,j})$. The values of
$\mathcal{P}(f^{i,j})$ and  $\mathcal{P}(g^{i,j})$ are shown in the tables
in Appendix.

Note that when we use Lemma \ref{lemma:SS03} for the upper bound on the
weight $\max_{\forall X}\{ \sum_{p \in X}W^{i,j} (x,y)\}$, for all
pairs $(i,j)$, the calculations are independent. For different pairs
$(i,j)$, $\lambda_{i,j}$ may be different. So, in
order to get an upper bound near the true value of  $\max_{\forall
X}\{ \sum_{p \in X}W^{i,j} (x,y)\}$, we have to select an
appropriate $\lambda_{i,j}$. This can be done by a trial-and-error
approach.

\begin{theorem}\label{th:new}
For all $\delta >0$, the asymptotic competitive ratio of $H \otimes
B$ is at most  2.5545.
\end{theorem}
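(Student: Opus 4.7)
The plan is to combine the upper-bound formula~(\ref{eqn:final-ratio}) with Lemma~\ref{lemma:SS03} and the relaxed $(f,g)$ construction introduced in this subsection, and then evaluate the resulting bound numerically with the instance $SH+$ from Section~\ref{subsec:SH+}. First, (\ref{eqn:final-ratio}) reduces the task to bounding
\[
\max_{1 \le i,j \le 7,\ \forall X}\ \sum_{p \in X} W^{i,j}(x,y),
\]
where $X$ ranges over multisets of rectangles that pack into a single unit bin, and by Lemma~\ref{lemma:transpose} it suffices to consider ordered pairs with $i \le j$, cutting the workload roughly in half.

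Second, for each such pair $(i,j)$ I would select a parameter $\lambda_{i,j}\in[0,1]$ and form $f^{i,j}, g^{i,j}$ as defined above. The pointwise supremum definition of $g^{i,j}$ gives $W^{i,j}(x,y)\le f^{i,j}(y)\,g^{i,j}(x)$ for free, and applying Lemma~\ref{lemma:SS03} with $h(p)=y$ and $w(p)=x$ then yields
\[
\max_{\forall X}\ \sum_{p\in X} W^{i,j}(x,y)\ \le\ \mathcal{P}(f^{i,j})\cdot\mathcal{P}(g^{i,j}).
\]
Chaining this with~(\ref{eqn:final-ratio}) produces
\[
R\ \le\ \frac{1}{1-\delta}\,\max_{1\le i\le j\le 7}\mathcal{P}(f^{i,j})\cdot\mathcal{P}(g^{i,j}),
\]
so the theorem reduces to verifying that this maximum is at most $2.5545$ and then letting $\delta\to 0^{+}$.

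Third, the numerical evaluation itself: I would invoke GLPK on MIP~(2) using the weight vectors derived from $f^{i,j}$ and $g^{i,j}$ for each of the $28$ admissible pairs, obtaining the tables of $\mathcal{P}(f^{i,j})$ and $\mathcal{P}(g^{i,j})$ collected in the Appendix. Before trusting the solver's output one must check that every extra constraint of MIP~(2) (for instance $5x_7+3.53x_{11}+1.47x_{18}\le 9$) is an integer-valid rounding of the area inequality $\sum x_i t_{i+1}\le 1$, so that no feasible pattern of MIP~(1) is lost; the text already performs this verification for one constraint and the same template should be applied one by one to the rest.

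The main obstacle is the tuning of the parameters $\lambda_{i,j}$. Lowering $\lambda_{i,j}$ pulls $f^{i,j}$ toward $W^{i}_{B}$, which tends to shrink $\mathcal{P}(f^{i,j})$ but enlarges $g^{i,j}(x)$ on those $x$ for which the summand $W_{B}^{j}(x)W_{H}(y)$ dominates $W_{H}(x)W_{B}^{i}(y)$, and raising $\lambda_{i,j}$ produces the opposite effect; the product $\mathcal{P}(f^{i,j})\cdot\mathcal{P}(g^{i,j})$ is therefore a delicate, non-convex function of $\lambda_{i,j}$ with no closed-form minimizer. I would perform a coarse grid search over $[0,1]$ for each of the $28$ pairs, refine around the apparent local minimum, and check that the worst pair yields a product at most $2.5545\cdot(1-\delta)$. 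A secondary difficulty is that Lemma~\ref{lemma:SS03} bounds the true pattern maximum only by the slack-inclusive product $\mathcal{P}(f)\cdot\mathcal{P}(g)$, so if some critical pair resists $\lambda$-tuning, one must tighten MIP~(2) with additional pairwise or triple-wise valid cuts, each of which again requires the same area-based justification to preserve MIP~(1)'s feasible region.
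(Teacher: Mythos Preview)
Your plan is essentially the paper's own argument: reduce via~(\ref{eqn:final-ratio}), factor each $W^{i,j}$ through the $\lambda$-weighted pair $(f^{i,j},g^{i,j})$, apply Lemma~\ref{lemma:SS03}, and solve the resulting MIP~(2) instances with GLPK. The methodology is sound.

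There is one point where your use of Lemma~\ref{lemma:transpose} differs from the paper's and where your version would hit trouble. You invoke the lemma only to restrict attention to $i\le j$ and thereby halve the number of pairs. The paper instead computes all $49$ products $\mathcal P(f^{i,j})\mathcal P(g^{i,j})$ and then, for each unordered pair $\{i,j\}$, uses Lemma~\ref{lemma:transpose} to replace the pattern maximum by whichever of the two factorization bounds is smaller. This matters because $f^{i,j}$ depends on $W_B^{i}$ while $f^{j,i}$ depends on $W_B^{j}$, so the two factorizations are genuinely different and can give different products even after $\lambda$-tuning. Concretely, with the paper's $\lambda$ values the pairs $(1,6)$, $(2,5)$, $(2,6)$ all exceed $2.5545$ (namely $2.5551$, $2.5587$, $2.5619$), and the proof rescues them by switching to the transposed factorizations $(6,1)$, $(5,2)$, $(6,2)$, which land at $2.5545$, $2.5340$, $2.5364$. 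Your stated fallback of adding further valid cuts to MIP~(2) might also close these gaps, but the cheaper fix is exactly the paper's: for each $\{i,j\}$, compute both $\mathcal P(f^{i,j})\mathcal P(g^{i,j})$ and $\mathcal P(f^{j,i})\mathcal P(g^{j,i})$ and take the minimum.
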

\begin{proof}
According to the tables in Appendix, by Lemma
\ref{lemma:transpose} and  Lemma \ref{lemma:SS03}, we have
\[
\max_{\forall X }\bigg\{  \sum_{p \in X}W^{1,2} (p)\bigg\}
 = \max_{\forall X }\bigg\{  \sum_{p \in X}W^{2,1} (p)\bigg\}
 \le \mathcal{P}(f^{1,2}) \mathcal{P}(g^{1,2}) \le 2.5539.
\]
\[
\max_{\forall X }\bigg\{  \sum_{p \in X}W^{1,6} (p)\bigg\}
 = \max_{\forall X }\bigg\{  \sum_{p \in X}W^{6,1} (p)\bigg\}
 \le \mathcal{P}(f^{6,1}) \mathcal{P}(g^{6,1}) \le 2.5545.
\]
\[
\max_{\forall X }\bigg\{  \sum_{p \in X}W^{2,5} (p)\bigg\}
 = \max_{\forall X }\bigg\{  \sum_{p \in X}W^{5,2} (p)\bigg\}
 \le \mathcal{P}(f^{5,2}) \mathcal{P}(g^{5,2}) \le 2.5340.
\]
\[
\max_{\forall X }\bigg\{  \sum_{p \in X}W^{2,6} (p)\bigg\}
 = \max_{\forall X }\bigg\{  \sum_{p \in X}W^{6,2} (p)\bigg\}
 \le \mathcal{P}(f^{6,2}) \mathcal{P}(g^{6,2}) \le 2.5364.
\]
For all the other $(i,j)$, by Lemma \ref{lemma:SS03}, we have
\[
\max_{\forall X }\bigg\{  \sum_{p \in X}W^{i,j} (p)\bigg\}
 \le \mathcal{P}(f^{i,j}) \mathcal{P}(g^{i,j})
 \le \mathcal{P}(f^{1,1}) \mathcal{P}(g^{1,1}) \le 2.5545.
\]
 \end{proof}

{\bf Remarks}: 
If we  use the weighting functions from \cite{S02} and the previous analysis
framework, we find that the competitive ratio is at least 3.04. 
(run our programming 2DHSH.c like  ``./2DHSH+.exe old $>$ yourfile'')
Even if we use the new weighting function,  by the previous analysis framework,
the competitive ratio
is still at least 3.04, by running our programming 2DHSH.c like  ``./2DHSH+.exe new1 $>$ yourfile''.
We also find that if we use the old weighting function from   \cite{S02} with
the new analysis framework,  the competitive ratio is at least 2.79.
(run our programming 2DHSH.c like  ``./2DHSH+.exe old2 $>$ yourfile'')
 The reason is that:
 Lemma \ref{lemma:SS03} does not work very well with the old weight function,
 i.e., the resulting value $F\cdot G$ is
away from the maximum weight of items in a single bin.

\section{Concluding Remarks}\label{se:con}
When we use the tool for solving the mixed integer programs,
there are two files which are necessary: one is the model file for
the linear or integer program itself (refer to Appendix), and the other
is the data file where the data is stored. We write a program to
generate the data and then call the tool GLPK. (Actually we call API
(Application Program Interface) of GLPK. To download
the source file, go to:
http://sites.google.com/site/xinhan2009/Home/files/2DHSH.c).

Our  framework can be applied to 3D online bin packing to result in
 an algorithm $H\times H \otimes SH+$ with its competitive ratio
$2.5545 \times 1.69103 (\approx 4.3198)$.

\noindent {\bf Acknowledgments}  The authors wish to thank the
  referees for their useful comments on the earlier draft of
 the paper. Their suggestions have helped improve the presentation
 of the paper.

 \newpage
\appendix

\section{Values of $f^{i,j}$ and $g^{i,j}$}
\begin{displaymath}
\begin{array}{|c|c|c|c|c|c|c|c|}
\hline
  (i,j) =  &(1,1 )   &(1,2 )   &(1,3 )   &(1,4 )   &(1,5 )   &(1,6 )   &(1,7 )  \\
\hline
\lambda_{i,j} & 0.500000   & 0.500000   & 0.540000   & 0.550000   & 0.565000   & 0.565000   & 0.600000  \\
\hline
\mathcal{P}(f^{i,j}) & 1.598272   & 1.598272   & 1.605095   & 1.606845   & 1.609490   & 1.609490   & 1.615665  \\
\hline
\mathcal{P}(g^{i,j}) & 1.598272   & 1.597872   & 1.574422   & 1.581742   & 1.585430   & 1.587508   & 1.575580  \\
\hline
\mathcal{P}(f^{i,j}) \mathcal{P}(g^{i,j}) & 2.554474   & 2.553834   & 2.527096   & 2.541614   & 2.551734   & 2.555079   & 2.545610  \\
\hline
\end{array}
\end{displaymath}
\begin{displaymath}
\begin{array}{|c|c|c|c|c|c|c|c|}
\hline
  (i,j) =  &(2,1 )   &(2,2 )   &(2,3 )   &(2,4 )   &(2,5 )   &(2,6 )   &(2,7 )  \\
\hline
\lambda_{i,j} & 0.500000   & 0.500000   & 0.530000   & 0.550000   & 0.565000   & 0.565000   & 0.600000  \\
\hline
\mathcal{P}(f^{i,j}) & 1.597328   & 1.597328   & 1.597148   & 1.597028   & 1.596938   & 1.596938   & 1.596729  \\
\hline
\mathcal{P}(g^{i,j}) & 1.609235   & 1.598326   & 1.586301   & 1.595016   & 1.602278   & 1.604268   & 1.589545  \\
\hline
\mathcal{P}(f^{i,j}) \mathcal{P}(g^{i,j}) & 2.570476   & 2.553051   & 2.533557   & 2.547285   & 2.558739   & 2.561917   & 2.538073  \\
\hline
\end{array}
\end{displaymath}
\begin{displaymath}
\begin{array}{|c|c|c|c|c|c|c|c|}
\hline
  (i,j) =  &(3,1 )   &(3,2 )   &(3,3 )   &(3,4 )   &(3,5 )   &(3,6 )   &(3,7 )  \\
\hline
\lambda_{i,j} & 0.500000   & 0.500000   & 0.530000   & 0.550000   & 0.565000   & 0.565000   & 0.600000  \\
\hline
\mathcal{P}(f^{i,j}) & 1.573676   & 1.573676   & 1.572837   & 1.572777   & 1.572732   & 1.572732   & 1.572627  \\
\hline
\mathcal{P}(g^{i,j}) & 1.609235   & 1.598326   & 1.586301   & 1.595016   & 1.602278   & 1.604268   & 1.589545  \\
\hline
\mathcal{P}(f^{i,j}) \mathcal{P}(g^{i,j}) & 2.532414   & 2.515247   & 2.494992   & 2.508604   & 2.519954   & 2.523084   & 2.499762  \\
\hline
\end{array}
\end{displaymath}
\begin{displaymath}
\begin{array}{|c|c|c|c|c|c|c|c|}
\hline
  (i,j) =  &(4,1 )   &(4,2 )   &(4,3 )   &(4,4 )   &(4,5 )   &(4,6 )   &(4,7 )  \\
\hline
\lambda_{i,j} & 0.500000   & 0.500000   & 0.535000   & 0.550000   & 0.565000   & 0.565000   & 0.600000  \\
\hline
\mathcal{P}(f^{i,j}) & 1.581245   & 1.581245   & 1.577140   & 1.575380   & 1.573621   & 1.573621   & 1.569515  \\
\hline
\mathcal{P}(g^{i,j}) & 1.609235   & 1.598326   & 1.586855   & 1.595016   & 1.602278   & 1.604268   & 1.589545  \\
\hline
\mathcal{P}(f^{i,j}) \mathcal{P}(g^{i,j}) & 2.544594   & 2.527344   & 2.502692   & 2.512755   & 2.521378   & 2.524510   & 2.494814  \\
\hline
\end{array}
\end{displaymath}
\begin{displaymath}
\begin{array}{|c|c|c|c|c|c|c|c|}
\hline
  (i,j) =  &(5,1 )   &(5,2 )   &(5,3 )   &(5,4 )   &(5,5 )   &(5,6 )   &(5,7 )  \\
\hline
\lambda_{i,j} & 0.500000   & 0.500000   & 0.535000   & 0.550000   & 0.565000   & 0.565000   & 0.600000  \\
\hline
\mathcal{P}(f^{i,j}) & 1.585370   & 1.585370   & 1.580113   & 1.577860   & 1.575607   & 1.575607   & 1.570350  \\
\hline
\mathcal{P}(g^{i,j}) & 1.609542   & 1.598326   & 1.587240   & 1.595374   & 1.602747   & 1.604737   & 1.589740  \\
\hline
\mathcal{P}(f^{i,j}) \mathcal{P}(g^{i,j}) & 2.551720   & 2.533939   & 2.508019   & 2.517277   & 2.525300   & 2.528436   & 2.496449  \\
\hline
\end{array}
\end{displaymath}
\begin{displaymath}
\begin{array}{|c|c|c|c|c|c|c|c|}
\hline
  (i,j) =  &(6,1 )   &(6,2 )   &(6,3 )   &(6,4 )   &(6,5 )   &(6,6 )   &(6,7 )  \\
\hline
\lambda_{i,j} & 0.500000   & 0.500000   & 0.530000   & 0.550000   & 0.565000   & 0.565000   & 0.600000  \\
\hline
\mathcal{P}(f^{i,j}) & 1.586853   & 1.586853   & 1.582237   & 1.579160   & 1.576853   & 1.576853   & 1.571468  \\
\hline
\mathcal{P}(g^{i,j}) & 1.609785   & 1.598326   & 1.586682   & 1.595657   & 1.603117   & 1.605107   & 1.589894  \\
\hline
\mathcal{P}(f^{i,j}) \mathcal{P}(g^{i,j}) & 2.554493   & 2.536309   & 2.510507   & 2.519798   & 2.527881   & 2.531019   & 2.498468  \\
\hline
\end{array}
\end{displaymath}
\begin{displaymath}
\begin{array}{|c|c|c|c|c|c|c|c|}
\hline
  (i,j) =  &(7,1 )   &(7,2 )   &(7,3 )   &(7,4 )   &(7,5 )   &(7,6 )   &(7,7 )  \\
\hline
\lambda_{i,j} & 0.500000   & 0.515000   & 0.535000   & 0.555000   & 0.565000   & 0.570000   & 0.600000  \\
\hline
\mathcal{P}(f^{i,j}) & 1.568686   & 1.560602   & 1.549821   & 1.539044   & 1.533655   & 1.530958   & 1.517143  \\
\hline
\mathcal{P}(g^{i,j}) & 1.621572   & 1.609605   & 1.602462   & 1.612258   & 1.622822   & 1.638219   & 1.624359  \\
\hline
\mathcal{P}(f^{i,j}) \mathcal{P}(g^{i,j}) & 2.543738   & 2.511952   & 2.483529   & 2.481335   & 2.488849   & 2.508043   & 2.464386  \\
\hline
\end{array}
\end{displaymath}

\section{Model File for GLPK and Usage of Our Program 2DHSH+.c}
\begin{verbatim}

param I:=50;

param c{i in 1..I}>=0;

param w{i in 1..I};

var x{i in 1..I}, integer, >=0;

maximize  f: sum{i in 1..I} w[i]*x[i] + (1-sum{i in 1..I} c[i]*x[i]) * 38/37;

s.t.    x0: sum{i in 1..I} c[i]*x[i] <= 1;
        x1: sum{i in 1..7} x[i] <= 1;
        x7: sum{i in 8..13} x[i] <= 2;
        x14: x[14] <= 3;
        x15: x[15] <= 3;

        x16: x[16] <= 4;
        x17: x[17] <= 5;
        x18: x[18] + x[19] <= 6;
        y715: 2*x[7]  + x[15]  <= 3.9;
        y71317: 3*x[7] + 2*x[13] + x[17] <= 5.9;
        y131524: 4*x[13] + 3*x[15] + x[24] <= 11.9;
        y71118:  5*x[7] + 3.53*x[11] + 1.47 *x[18] <= 9;
        y7132036: 12*x[7]+8*x[13] + 3*x[20] + x[36] <=23;
        y7132130: 9*x[7] + 6*x[13] + 2*x[21] + x[30] <=17;
         others{i in 20..50}: x[i] <= i -13;
end;

 \end{verbatim}

Whene the parameters in Super Harmonic such as $\alpha_i$, $\beta_i$,
$\gamma_i$ and $\phi(i)$ and $\varphi(i)$ are given,
we can calculate the weighting functions of Super Harmonic $W_B^{j}(\cdot)$.
Then the weighting functions $W^{i,j}(x,y)$ for algorithm $H \otimes SH+$
can be calculated as well as $f^{i,j}(y)$ and $g^{i,j}(x)$.
For each $(i,j)$, we call API of GLPK to solve $\mathcal{P}(f^{i,j})$ and  $\mathcal{P}(g^{i,j})$.

To use our program under {\em linux} system:
\begin{itemize}
\item Install GLPK,
\item Compile: ``gcc -o 2DHSH+.exe 2DHSH+.c -lglpk''
\item Run:  ``./2DHSH+.exe new2 $>$ yourfile''
\end{itemize}
If there is an error message like ``Could not load *.so'' 
 when you  compile the source,
 then try to set "LD\_LIRARY\_PATH" as follows:
  ``LD\_LIRARY\_PATH= \$LD\_LIRARY\_PATH:/usr/local/lib'',  then ``export LD\_LIRARY\_PATH''.



\end{document}